\documentclass[journal]{IEEEtran}
\usepackage{amssymb,amsfonts,color,fancybox}
\usepackage[mathscr]{eucal}
\usepackage{graphicx}
\usepackage{ifpdf}
\usepackage{longtable}	
\usepackage{multicol}
\usepackage{float}
\usepackage{algpseudocode}
\usepackage{lipsum}
\usepackage[sort,compress]{cite}

\title{Construction of power flow feasibility sets}

\author{Krishnamurthy Dvijotham, Konstantin Turitsyn
\thanks{K.D. is with the California Institute of Technology, e-mail: dvij@caltech.edu. K.T. is with the Massachusetts Institute of Technology, Cambridge, MA, 02139 USA.

}}

 \markboth{IEEE Transactions on Power Systems,~Vol.~, No.~, ~2015}%
 {Krishnamurthy \MakeLowercase{\textit{et al.}}: Construction of power flow feasibility sets}%

\usepackage{verbatim}
\usepackage{algorithm}
\usepackage{bm}
\usepackage{algpseudocode}
\usepackage[all]{xy}
\usepackage{url}
\usepackage{amsmath,amsthm,amssymb}
\usepackage{graphicx}
\usepackage{subcaption}


\newtheorem{theorem}{Theorem}[section]

\newtheorem{lemma}{lemma}
\theoremstyle{definition}
\newtheorem{definition}{Definition}

\theoremstyle{remark}
\newtheorem{remark}{Remark}
\theoremstyle{remark}
\newtheorem{assumption}{Assumption} 

\usepackage{accents}

\newcommand{\nf}{k}

\newcommand{\Poly}[2]{\mathrm{Poly}^{#2}\br{#1}}

\newcommand{\DSF}{DSF}
\newcommand{\SFe}{SF}
\newcommand{\Ly}[1]{\mathcal{L}_y\br{#1}}

\newcommand\munderbar[1]{%
  \underaccent{\bar}{#1}}

\newcommand{\E}{\mathcal{E}}
\newcommand{\G}{\mathrm{pv}}
\newcommand{\NSB}{\mathrm{nsb}}

\newcommand{\Vs}{\mathcal{V}}

\newcommand{\Lo}{\mathrm{pq}}

\newcommand{\Vset}{v}

\newcommand{\Int}{\mathrm{Int}}

\newcommand{\Rep}[1]{\mathrm{Re}\br{#1}}
\newcommand{\Imp}[1]{\mathrm{Im}\br{#1}}

\newcommand{\Vc}{V^c}
\newcommand{\Vml}{\rho}
\newcommand{\Vpl}{\theta}
\newcommand{\Ql}{\underline{q}}
\newcommand{\Qu}{\overline{q}}
\newcommand{\Pl}{\underline{p}}
\newcommand{\Pu}{\overline{p}}

\newcommand{\PQ}{PQ}
\newcommand{\PV}{PV}
\newcommand{\herm}[1]{{#1}^\ast}

\newcommand{\Com}{\mathbb{C}}

\newcommand{\JF}{J_F}

\newcommand{\Hc}{H^{eq}}
\newcommand{\Fc}{H^{op}}

\newcommand{\ub}{\bar{V}}
\newcommand{\lb}{\munderbar{V}}
\newcommand{\fb}{\bar{f}}







\newcommand{\expb}[1]{\exp\left({#1}\right)}

\newcommand{\norm}[1]{\left\lVert {#1} \right\rVert}

\newcommand{\One}{\mathbf{1}}
\newcommand{\tran}[1]{{#1}^T}

\newcommand{\Scal}{\mathcal{S}}




\newcommand{\ic}{\mathbf{j}}


\newcommand{\detb}[1]{\det\left({#1}\right)}
\newcommand{\tranb}[1]{{\left({#1}\right)}^T}
\newcommand{\br}[1]{\left({#1}\right)}

\newcommand{\inv}[1]{{\left(#1\right)}^{-1}}





\newcommand{\R}{\mathbb{R}}


%






\newcommand{\C}{\mathcal{C}}

\newcommand{\FHinf}[1]{q_\infty}
\newcommand{\FHtwo}[1]{q_2}
\newcommand{\FHone}[1]{q_1}

\begin{document}
\maketitle
\begin{abstract}
	We develop a new approach for construction of convex analytically simple regions where the AC power flow equations are guaranteed to have a feasible solutions. Construction of these regions is based on efficient semidefinite programming techniques accelerated via sparsity exploiting algorithms. Resulting regions have a simple geometric shape in the space of power injections (polytope or ellipsoid) and can be efficiently used for assessment of system security in the presence of uncertainty. Efficiency and tightness of the approach is validated on a number of test networks. 
\end{abstract}
\section{Introduction}
Future power systems relying on large amounts of clean and renewable generation will operate at highly increased levels of uncertainty. Lack of controllability and predictability of renewable generation output will require substantial revision of modern planning and operational practices. On the  operational level, unexpected variations of wind and solar power can potentially compromise the otherwise secure systems. From the long-term perspective, heavy reliance on weather dependent distributed generation sources increases the spatial variability of load levels, and compromises the validity of approaches based on scenario analysis.

In recent years there has been an explosion in the number of academic works that extend the existing procedures by incorporating uncertainty. Most common approaches include stochastic programming, robust and chance-constrained counter-parts of traditional optimal power flow problem \cite{Zhang:2011ci,Sousa:2011jd,Jabr:2013jq,Bent:2013bb,Cao:2013gs}. Despite the progress achieved by the academic community, the proposed algorithms are usually computationally prohibitive and may not be adopted by industry in the near future. Moreover, most of the existing approaches rely on linearized versions of power flow equations and may not be suitable for operational security assessment purposes where nonlinear effects dominate.

Our work attempts to address the need for computationally tractable tools to assess the effect of renewable uncertainty on power system security. The key contribution of our work is a technique for characterization of a maximal set of uncertain power injections that can be tolerated by the power system while maintaining feasibility. We consider two types of uncertainty sets: first is a polytope, where the power injection on a number of buses can vary independently. The second is an ellipsoid which represents chance constraints in situations where variability of individual resources is correlated and has a Gaussian probability distribution. For both of the sets our construction guarantees that the AC power flow equations will have feasible solution for any injections in the uncertainty set. Moreover, by construction the set is a maximal for a given shape. In other words any uniform rescaling of the set will result in violation of one of the constraints or disappearance of the solution. 


Characterization of the feasibility region has been addressed by multiple authors in the recent years. Several studies in the last decades have looked at the question of convexity of the feasibility set \cite{Lesieutre:2005hc,Makarov:2008dj,Zhang:2013je,Lavaei:2014dk}. Non-convexity of the feasibility region for general networks has been explicitly demonstrated in \cite{Lesieutre:2005hc,Makarov:2008dj}. In the last decade, however, much progress has been made in understanding sufficient conditions for convexity of the feasibility region \cite{Zhang:2011ds,Zhang:2013je,Molzahn:2014jpa,Lavaei:2014dk} relevant for global optimality of optimal power flow relaxations. Under the assumption that the network is lossless and all buses are \PV~buses, a number of papers have studied necessary and sufficient conditions that guarantee the existence of power flow solutions \cite{dorfler2014synchronization}. In recent work \cite{BitarLouca}, the authors propose a general framework to construct Linear Matrix Inequality (LMI)-based inner approximations to the feasible set of a class of quadratically constrained quadratic programs (QCQPs). However, the framework does not deal with arbitrary quadratic equality constraints like the AC power flow equations for a meshed network.

Computationally tractable characterization of the power flow feasibility region in terms of ellipsoids in injection space was proposed in \cite{Saric:2008dr}. This is similar to the regions developed in our paper, but relies on the linearized DC power flow approximation. A dual characterization of the feasibility set in terms certificates of insolvability has been developed recently in \cite{Molzahn:2013gv}. More recently, a do-not-exceed limit strategy was proposed for optimal dispatch instructions for intermittent generation \cite{Zhang:2015di, Zhao:2015bl}, that relies on  similar mathematical constructions as developed in this paper  but limits the analysis only to DC power flow models. 

A conceptually close paper that has largely inspired our effort is the recently published study \cite{Bolognani:2015ek} presenting the construction of certificates for existence of AC power flow solutions. This work relies on well-known Banach fixed point theorem to construct ellipsoidal and polytopic regions where the power flow solutions are guaranteed to exist. This has been extended in \cite{Yu:2015wt} to reduce the conservativeness of the constructed certificates. This manuscript extends and generalizes the previous approaches, and proposes a unified methodology for constructing the certificates for existence of feasible solution to power flow equations. In comparison to previous studies it allows for natural incorporation of all the important constraints on voltage levels and power flows, and at the same time provides a computationally tractable methodology based on semidefinite Programming (SDP) that greatly improves the size of the certified region in comparison to previous studies. 

Presentation of our results is organized as follows. In section \ref{sec:modeling} we introduce the notations and key modeling assumptions employed in the paper. Section \ref{sec:LMI} introduces the mathematical background behind our approach. Applications of the approach in practice are described in \ref{sec:applications}. In section \ref{sec:simulations} we validate the approach via analysis of several standard IEEE cases. We conclude in section \ref{sec:conclusions} by assessing the results and discussing possible extensions of the approach.

\section{Modeling} \label{sec:modeling}
In this section we define the mathematical notations used throughout the paper, and discuss the modeling assumptions behind the construction of certified feasible regions.
\subsection{Notations}
We use $\R$ to denote the set of real numbers, $\Com$ the set of complex numbers. $\R^n,\Com^n$ denote the corresponding Euclidean space in $n$ dimensions. Given a set $\C\subset\R^n$, $\Int\br{\C}$ denotes the interior of the set. Given a complex number $x\in\Com$, $\Rep{x}$ denotes its real part and $\Imp{x}$ its imaginary part. $\herm{x}$ denotes its complex conjugate. $|x|$ is its absolute value and $\angle x \in [-\pi,\pi)$ is its phase. $\One$ denotes the vector with all entries equal to $1$.  The Jacobian of a function $F:\R^n\mapsto \R^m$, denoted as $J_F$, is an $n \times m$ matrix whose $i$-th row is the gradient of $F_i$. Given a finite set $S$, $|S|$ denotes the number of elements in the set.

\subsection{AC power flow model}
We represent the transmission network as a graph $\br{\Vs,\E}$ where $\Vs$ is the set of nodes and $\E$ is the set of edges. In power systems terminology, the nodes represent the buses and the edges correspond to power lines. Buses are denoted by indices $i=0,1,\ldots,n$ and lines by ordered pairs of nodes $\br{i,j}$. We pick an arbitrary orientation for each edge, so that for an edge between $i$ and $j$, only one of $\br{i,j}$ and $\br{j,i}$ is in $\E$. 

The transmission network is characterized by its complex admittance matrix $Y \in \Com^{n\times n}$. $Y$ is symmetric but not necessarily Hermitian. Define $G=\Rep{Y},B=\Imp{Y}$.

Let $V_i$ be the voltage phasor, $p_i$ and $q_i$ denote active and reactive injection at the bus $i$ respectively. $V$ is the vector of voltage phasors at all buses. Three types of buses are considered in this work:
\begin{itemize}
\item\underline{\PV~ buses} where active power injection and voltage magnitude are fixed, while voltage phase and reactive power are variables. The set of \PV~ buses is denoted by $\G$. The voltage magnitude set-point at bus $i\in\G$ is denoted by $\Vset_i$.
\item\underline{\PQ~ buses} where active and reactive power injections are fixed, while voltage phase and magnitude are variables. The set of \PQ~ buses is denoted by $\Lo$.
\item\underline{Slack bus}, a reference bus at which the voltage magnitude and phase are fixed, and the active and reactive power injections are free variables. We choose bus $0$ as the slack bus as a convention.
\end{itemize}
We denote the union of \PV~and \PQ~buses as $\NSB=\G\cup\Lo$. Let $\nf=|\NSB|+|\Lo|$. This is the total number of variables to be solved for in the power flow equations.

\begin{definition}[Valid Voltage Phasor Vector]
A vector $V\in \Com^{n+1}$, indexed by $i=0,\ldots,n$, is said to be a \emph{valid voltage phasor vector} if $|V_i|=\Vset_i$ for each $i\in\G$ and $|V_0|=\Vset_0,\angle V_0=0$. Throughout this paper, we will work only with valid voltage phasors. We denote the constraints on valid voltage phasors as $\Hc\br{V}=0$.  
\end{definition}

\begin{definition}[Injection Vector]
A vector $s \in \R^\nf$ is said to be an \emph{injection vector}. The first $n$ coordinates correspond to active power injections at buses $1$ through $n$, and the last $|\Lo|$ components correspond to the reactive injections at the \PQ~buses.
\end{definition}

\begin{definition}[Power Flow Operator]
  Let $V$ be a valid voltage phasor with $V_i=\expb{\Vml_i+\ic\Vpl_i}$. Define the power flow operator $F$ as
\begin{subequations}
\begin{align}
& [F\br{V}]_i = \sum_{j=0}^n B_{ij}\expb{\Vml_i+\Vml_j}\sin\br{\Vpl_i-\Vpl_j} \nonumber \\ 
&+\sum_{j=0}^n G_{ij}\expb{\Vml_i+\Vml_j}\cos\br{\Vpl_i-\Vpl_j},i=1,\ldots,n\label{eq:F2a}\\
&[F\br{V}]_{n+i} =\sum_{j=0}^nG_{ij}\expb{\Vml_i+\Vml_j}\sin\br{\Vpl_i-\Vpl_j}\nonumber \\
&-\sum_{j=0}^n B_{ij}\expb{\Vml_i+\Vml_j}\cos\br{\Vpl_i-\Vpl_j},i=1,\ldots,|\Lo| \label{eq:F2b}
\end{align}\label{eq:F}	
\end{subequations}
\end{definition}
\begin{definition}[Power Flow Equations]
For any valid voltage phasor vector $V$ and any injection vector $s$, the power flow equations are give by $F\br{V}=s$. The variables solved for are $\begin{pmatrix}\Vpl_{\NSB} \\ \Vml_{\Lo} \end{pmatrix}$. We denote by $\JF\br{V}$ the Jacobian of $F$ with respect to $\begin{pmatrix}\Vpl_{\NSB} \\ \Vml_{\Lo} \end{pmatrix}$.
\end{definition}

We now quote a result from our related paper \cite{DjMonotone} that expresses the Jacobian matrix as a quadratic function of the vector of voltage phasors $V$. We will use this result in the certification procedure described in section \ref{sec:cert}.
\begin{lemma}\label{lem:Jacob}
The power flow Jacobian $\JF\br{V}$ can be written as a quadratic matrix function of the voltage phasors:
\begin{align}
\sum_{i \in \Lo}\Delta_i |V_i|^2+\sum_{\br{i,j}\in\E} \Gamma_{ij} \Rep{V_i\herm{V_j}}+\Psi_{ij} \Imp{V_i\herm{V_j}}
\end{align}
where $\Delta_i,\Gamma_{ij},\Psi_{ij}$ are $\nf \times \nf$ matrices that are functions of the network admittance matrix $Y$. 
\end{lemma}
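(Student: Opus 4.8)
The plan is to exploit the observation, already visible in \eqref{eq:F2a} and \eqref{eq:F2b}, that the entries of $F$ are \emph{linear} in three families of quadratic monomials of the voltage phasors. Writing $V_i=\expb{\Vml_i+\ic\Vpl_i}$, the identity $V_i\herm{V_j}=\expb{\Vml_i+\Vml_j}\br{\cos\br{\Vpl_i-\Vpl_j}+\ic\sin\br{\Vpl_i-\Vpl_j}}$ gives
\begin{align*}
\Rep{V_i\herm{V_j}}&=\expb{\Vml_i+\Vml_j}\cos\br{\Vpl_i-\Vpl_j},\\
\Imp{V_i\herm{V_j}}&=\expb{\Vml_i+\Vml_j}\sin\br{\Vpl_i-\Vpl_j},\\
\abs{V_i}^2&=\expb{2\Vml_i}.
\end{align*}
Substituting these into \eqref{eq:F2a}--\eqref{eq:F2b}, and recalling that $G_{ij},B_{ij}$ vanish unless $\br{i,j}\in\E$ or $i=j$, each component of $F$ becomes a fixed linear combination (with coefficients read off from $Y$) of the quantities $\Rep{V_i\herm{V_j}},\Imp{V_i\herm{V_j}}$ over edges together with $\abs{V_i}^2$ over nodes.

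The first key step is to show that the linear span $\mathcal{B}$ of these monomials is \emph{invariant} under the partial-derivative operators $\Diff{}{\Vpl_l}$ and $\Diff{}{\Vml_l}$ that define the Jacobian. A direct computation yields the differentiation rules
\begin{align*}
\Diff{}{\Vpl_l}\Rep{V_i\herm{V_j}}&=-\br{\delta_{li}-\delta_{lj}}\Imp{V_i\herm{V_j}},\\
\Diff{}{\Vpl_l}\Imp{V_i\herm{V_j}}&=\br{\delta_{li}-\delta_{lj}}\Rep{V_i\herm{V_j}},\\
\Diff{}{\Vml_l}\Rep{V_i\herm{V_j}}&=\br{\delta_{li}+\delta_{lj}}\Rep{V_i\herm{V_j}},\\
\Diff{}{\Vml_l}\Imp{V_i\herm{V_j}}&=\br{\delta_{li}+\delta_{lj}}\Imp{V_i\herm{V_j}},
\end{align*}
so that every derivative of a basis monomial is again a constant-coefficient combination of basis monomials. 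In particular $\abs{V_i}^2$ is produced only by $\Diff{}{\Vml_i}\Rep{V_i\herm{V_i}}=2\abs{V_i}^2$, i.e.\ by differentiating a diagonal term with respect to a log-magnitude variable.

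With these rules in hand, the second step is bookkeeping. The variables are $\br{\Vpl_{\NSB},\Vml_{\Lo}}$, so the Jacobian has a block of phase-columns indexed by $\NSB$ and a block of magnitude-columns indexed by $\Lo$, while its rows split into the active-power block \eqref{eq:F2a} and the reactive-power block \eqref{eq:F2b}. Differentiating the linear-in-$\mathcal{B}$ expression for $F$ entrywise and applying the rules above expresses each entry $\sqb{\JF\br{V}}_{ab}$ as a linear combination of $\Rep{V_i\herm{V_j}}$, $\Imp{V_i\herm{V_j}}$ and $\abs{V_i}^2$ whose coefficients involve only $G,B$ and the fixed Kronecker factors $\delta_{li},\delta_{lj}$. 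Collecting, for each edge $\br{i,j}$, the coefficients of $\Rep{V_i\herm{V_j}}$ across all entries into a single $\nf\times\nf$ matrix yields $\Gamma_{ij}$; doing the same for $\Imp{V_i\herm{V_j}}$ yields $\Psi_{ij}$; and collecting the coefficients of $\abs{V_i}^2$ yields $\Delta_i$. Since $\abs{V_i}^2$ arises only from a magnitude-derivative of a diagonal term, and $\Vml_i$ is a differentiation variable precisely when $i\in\Lo$, the sum over $\abs{V_i}^2$ ranges only over $i\in\Lo$, matching the statement; all three families of matrices depend only on $Y$, as claimed.

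I expect the main obstacle to be purely organizational rather than conceptual: keeping the chain-rule signs consistent across the active/reactive row blocks and the phase/magnitude column blocks, and correctly separating the diagonal ($i=j$) contributions that generate the $\abs{V_i}^2$ terms from the off-diagonal ones that generate the edge terms. Once the invariance of $\mathcal{B}$ under differentiation is established, the form of the representation is forced, and only the precise entries of $\Delta_i,\Gamma_{ij},\Psi_{ij}$ remain to be tabulated.
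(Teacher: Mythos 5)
Your argument is correct, but there is nothing in the paper to compare it against: the paper does not prove Lemma \ref{lem:Jacob} at all, it simply quotes the result from the companion reference \cite{DjMonotone}. Your derivation is therefore a self-contained substitute rather than an alternative to an existing proof. The two key observations --- that each component of $F$ in \eqref{eq:F2a}--\eqref{eq:F2b} is a fixed linear combination, with coefficients from $G$ and $B$, of the monomials $\Rep{V_i\herm{V_j}}$, $\Imp{V_i\herm{V_j}}$, $\abs{V_i}^2$, and that this span is closed under $\Diff{}{\Vpl_l}$ and $\Diff{}{\Vml_l}$ --- are exactly what is needed, and your differentiation rules check out, including the fact that $\abs{V_i}^2$ can only be generated by $\Diff{}{\Vml_i}$ acting on the diagonal term $G_{ii}\expb{2\Vml_i}$, which forces the first sum to range over $i\in\Lo$ only. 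One small piece of bookkeeping you wave at but should make explicit: the lemma sums over \emph{oriented} edges $\br{i,j}\in\E$, while your differentiation rules produce both $\br{i,j}$ and $\br{j,i}$ monomials; these fold together using $\Rep{V_i\herm{V_j}}=\Rep{V_j\herm{V_i}}$ and $\Imp{V_i\herm{V_j}}=-\Imp{V_j\herm{V_i}}$, which only adjusts signs inside $\Psi_{ij}$ and does not affect the claimed form. With that noted, the representation and the dependence of $\Delta_i,\Gamma_{ij},\Psi_{ij}$ on $Y$ alone follow as you say.
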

\subsubsection{Operational Constraints on Voltages}
Apart from the power flow equations that describe the conservation laws and behavior of generators and loads we also consider the operational constraints (line limits, bounds on voltage magnitudes etc.). In particular, we require that the solutions of power flow equations satisfy the following constraints:
\begin{subequations}
\begin{align}
\ub_i^2\geq |V_i|^2\geq \lb_i^2,i=1,\ldots,n \label{eq:Oper1a}\\
|V_{k_1}-V_{k_2}|^2\leq \fb_k^2,k=1,\ldots,m \label{eq:Oper1b} \\
\Ql_i \leq \Imp{V_i\herm{\br{YV}}_i} \leq \Qu_i,i\in\G \label{eq:Oper1c}\\
\Ql_0 \leq \Imp{V_0\herm{\br{YV}}_0} \leq \Qu_0 \label{eq:Oper1d}\\
\Pl_0 \leq \Rep{V_0\herm{\br{YV}}_0} \leq \Pu_0 \label{eq:Oper1e}
\end{align}\label{eq:Oper1}
\end{subequations}
The first two constraints \eqref{eq:Oper1a},\eqref{eq:Oper1b} can be interpreted as standard operational constraints imposed on voltage magnitudes and scaled current flows in a power system. Additionally, we also require the constraints arising from the specification of the power flow problem: upper and lower bounds on the reactive power injections at the \PV~buses (generator reactive limits): \eqref{eq:Oper1c}, upper and Lower bounds on the reactive power injections at the slack bus \eqref{eq:Oper1d} and upper and lower bounds on the active power injection at the slack bus \eqref{eq:Oper1e}. These constraints are collectively denoted as $\Fc\br{V}\geq 0$

Although only these constraints are enforced in our simulations, the technique can be naturally extended to incorporate any constraints defined via quadratic inequalities in $V$.

\section{Algorithmic construction of feasible region} \label{sec:LMI}

Our goal in this paper is to find regions in the space of injections such that every point in the region has a power flow solution satisfying the operational constraints \eqref{eq:Oper1}. We first define this precisely:
\begin{definition}[Domain of Strict Feasibility (\DSF)]
$s$ is said to be \emph{strictly feasible (\SFe)} if there is a solution of the power flow equations $F\br{V}=s$ such that $\Fc\br{V}>0$. We say that $\Scal \subset \R^\nf$ is \emph{domain of strict feasibility (\DSF)} if every $s \in \Scal$ is strictly feasible.
\end{definition}
\begin{remark}
We require strict feasibility to deal with mathematical issues that arise in our certification procedure. From a practical perspective, this does not matter much, since this simply corresponds to make the bounds (line limits, voltage magnitude limits etc.) tighter by a very small amount.
\end{remark}
In this paper, we will assume that $\Scal$ is an ellipsoidal or box region centered at around a nominal strictly feasible  $s^0$. These may be the $0$ injection vector or a nominal operating point known to the system operator to be strictly feasible. We will attempt to find the largest ball around $s^0$ that is a \DSF. The construction is algorithmic and based on the following idea: Let $s\in \Scal$ be arbitrary. We consider the straight line path from $s^0$ to $s$. Let $V^0$ be a solution of $F\br{V}=s$ such that $\Fc\br{V^0}>0$. If $\JF\br{V^0}$ is non-singular, the power flow operator is locally invertible (by the inverse function theorem) in a small ball around $s^0$. Thus, we can take a small (but finite) step from $s^0$ towards $s$, on the line segment connecting them. Let us say the new point is $s^1$. Suppose that $s^1$ is strictly feasible as well, and $s^1=F\br{V^1},\Fc\br{V^1}>0$. If $\JF\br{V^1}$ is non-singular, we can again move a small step towards $s$ to get $s^2$. This procedure can continue as long as we never hit a singularity of the Jacobian $\JF\br{V}$ or a point such that $\Fc\br{V}\not>0$. The following sections illustrate how we can check this condition using semidefinite programming.
The procedure is essentially like the continuation power flow algorithm \cite{ajjarapu1992continuation}. However, our goal is not to find a power flow solution (like in the continuation power flow), but to certify that a strictly feasible power flow solution exists for every $s\in\Scal$. This is illustrated pictorially in figure \ref{fig:Mapping}.
\begin{figure*}
\centering
\includegraphics[height=.2\textheight]{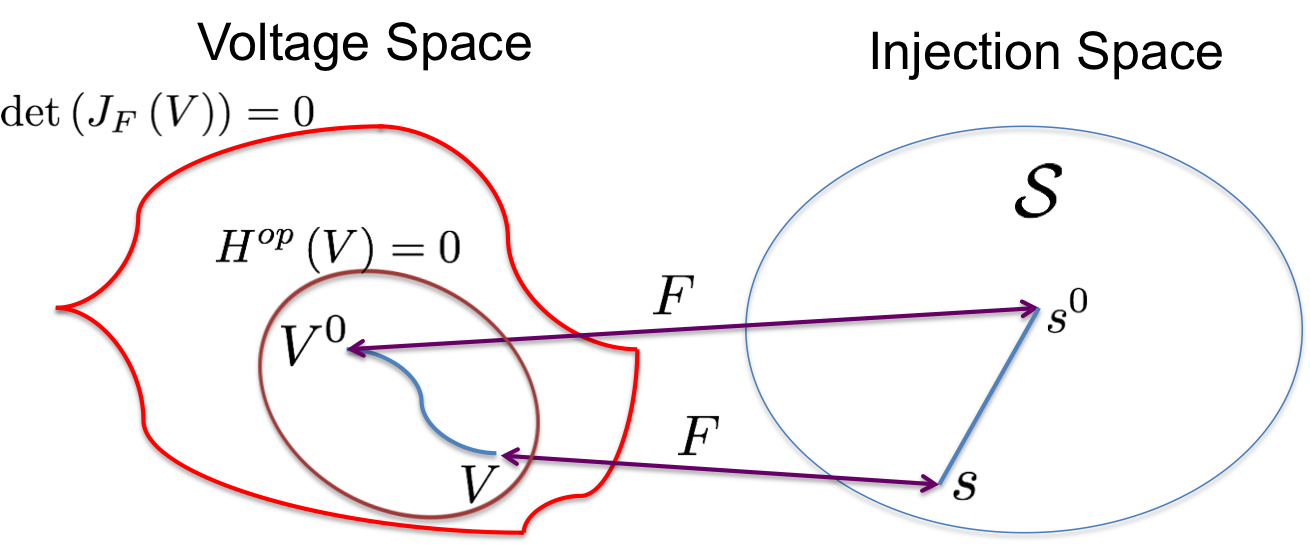}
\caption{Pictorial depiction of our Framework: The red curve is a closed surface in the voltage space defined by $\detb{\JF\br{V}}=0$ and the inner brown curve represents the boundary of the operational constraints on voltages. Inside this region, voltages satisfy $\Fc\br{V}>0$. The blue ellipse denotes the region $\Scal$ in injection space. As we move from $s^0$ to $s$ in injection space, the solution moves from $V^0$ to $V$ in voltage space along the curve. Our certification procedure consists of ensuring that for any $s \in \Scal$, the curve in the voltage space never intersects the red curve or the boundary of the operational constraints $\Fc\br{V}= 0$.}
\label{fig:Mapping}
\end{figure*}
\subsection{Mathematical Characterization of feasible region} \label{sec:cert}
We now formalize the arguments outlined in the previous section. We start by making an assumption on $\Scal$:
\begin{assumption}\label{assum1}
$\Scal$ is convex and $\exists s^0\in\Scal$ that is strictly feasible.
\end{assumption}
\begin{remark}
In most cases, $s^0=0$ satisfies these conditions so this assumption is not problematic. More generally, we will attempt to certify that all injection vectors in a ball around a known strictly feasible nominal injection vector are strictly feasible.
\end{remark}
\begin{theorem}\label{thm:PFCond}
Suppose the following implications hold:
\begin{subequations}
\begin{align}
&\Fc\br{V}\geq 0,\Hc\br{V}=0 \implies \detb{\JF\br{V}}\neq 0\label{eq:MainImplication1}\\
&\Fc\br{V}\geq 0, \Hc\br{V}=0, F\br{V}\in\Scal \implies \Fc\br{V}> 0\label{eq:MainImplication2}
\end{align}\label{eq:MainImplication}
\end{subequations}
 Then, $\Scal$ is a \DSF.
\end{theorem}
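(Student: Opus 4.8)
The plan is to make rigorous the continuation argument sketched above: fix an arbitrary $s\in\Scal$ and show that a strictly feasible solution of $F\br{V}=s$ can be reached by tracking solutions along the straight-line homotopy $s\br{t}=\br{1-t}s^0+t s$, $t\in\sqb{0,1}$. By Assumption \ref{assum1} the segment $\Set{s\br{t}:t\in\sqb{0,1}}$ lies in $\Scal$. Let $V^0$ be a strictly feasible solution at $s^0$ and define $T^\ast$ to be the supremum of all $\tau\in\sqb{0,1}$ for which there exists a continuous path $t\mapsto V\br{t}$ on $\sqb{0,\tau}$ with $V\br{0}=V^0$, $F\br{V\br{t}}=s\br{t}$ and $\Fc\br{V\br{t}}>0$. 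First I would check $T^\ast>0$: since $\Fc\br{V^0}>0\geq 0$ and $\Hc\br{V^0}=0$, implication \eqref{eq:MainImplication1} gives $\detb{\JF\br{V^0}}\neq 0$, so the inverse function theorem produces a local $C^1$ solution branch through $V^0$, and continuity of $\Fc$ keeps it strictly feasible for small $t$.

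The technical heart of the argument is to show that the path converges as $t\nearrow T^\ast$ and that its limit is again strictly feasible. The key structural fact is that the voltage feasible region $\mathcal{K}=\Set{V\in\Com^{n+1}:\Fc\br{V}\geq 0,\ \Hc\br{V}=0}$ is \emph{compact}: the magnitude bounds \eqref{eq:Oper1a} together with the fixed magnitudes imposed by $\Hc\br{V}=0$ confine every phasor $V_i$ to a bounded region of $\Com$, so $\mathcal{K}$ is closed and bounded. Because implication \eqref{eq:MainImplication1} forces $\detb{\JF\br{V}}\neq0$ everywhere on the compact set $\mathcal{K}$, the map $V\mapsto\JF\br{V}^{-1}$ is continuous and hence uniformly bounded on $\mathcal{K}$, say $\norm{\JF\br{V}^{-1}}\leq L$. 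Differentiating $F\br{V\br{t}}=s\br{t}$ along the branch gives $\dot V\br{t}=\JF\br{V\br{t}}^{-1}\br{s-s^0}$, so $\norm{\dot V\br{t}}\leq L\norm{s-s^0}$; the path is therefore Lipschitz in $t$ and extends continuously to a limit $V^\ast=\lim_{t\nearrow T^\ast}V\br{t}\in\mathcal{K}$. By continuity, $F\br{V^\ast}=s\br{T^\ast}$ and $\Fc\br{V^\ast}\geq0$, with $\Hc\br{V^\ast}=0$.

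It remains to close the argument. Since $s\br{T^\ast}\in\Scal$, implication \eqref{eq:MainImplication2} upgrades $\Fc\br{V^\ast}\geq0$ to $\Fc\br{V^\ast}>0$, so $V^\ast$ is a genuine strictly feasible solution at $T^\ast$. Now $\detb{\JF\br{V^\ast}}\neq0$ by \eqref{eq:MainImplication1}, so if $T^\ast<1$ the inverse function theorem lets me continue the branch past $T^\ast$ while preserving $\Fc>0$, contradicting the maximality of $T^\ast$. Hence $T^\ast=1$, and the same limiting argument applied at $t=1$ yields a strictly feasible solution of $F\br{V}=s$. As $s\in\Scal$ was arbitrary, $\Scal$ is a \DSF.

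The two hypotheses correspond exactly to the two ways the continuation can fail, and controlling these is the crux. If the Jacobian became singular the branch could fold back or bifurcate (ruled out by \eqref{eq:MainImplication1}), and even with a nonsingular Jacobian the tracked solution might drift onto the constraint boundary $\Fc\br{V}=0$, where it could leave the feasible region (ruled out by \eqref{eq:MainImplication2}). I expect the delicate point to be establishing genuine convergence of the path at $T^\ast$, rather than mere subsequential convergence through compactness: it is the uniform bound on $\JF\br{V}^{-1}$ over $\mathcal{K}$ that makes the branch Lipschitz and hence guarantees the limit $V^\ast$ is a bona fide point of the branch from which the argument can proceed.
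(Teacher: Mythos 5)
Your proof is correct and rests on the same continuation idea as the paper's: move from $s^0$ to $s$ along the connecting segment (which stays in $\Scal$ by convexity of $\Scal$), use \eqref{eq:MainImplication1} to keep the solution branch continuable, and use \eqref{eq:MainImplication2} to rule out contact with the boundary $\Fc\br{V}=0$. The execution differs in one respect worth noting. The paper realizes the continuation as the Newton-flow ODE $\frac{d}{dt}\br{\cdot}=\inv{\JF\br{V}}\br{s-F\br{V}}$ on $t\in[0,\infty)$, so that $F\br{V\br{t}}$ relaxes exponentially to $s$ while remaining on the segment; it then asserts that the trajectory ``must converge'' to a solution, leaving implicit exactly the point you flag as delicate, namely why the voltage path itself has a limit. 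Your version --- linear homotopy on $[0,1]$, a maximal-interval argument, and compactness of $\Set{V:\Fc\br{V}\geq0,\ \Hc\br{V}=0}$ yielding a uniform bound on $\norm{\inv{\JF\br{V}}}$ and hence a Lipschitz solution path --- supplies that convergence argument explicitly, which is a genuine improvement in rigor. The one small point you gesture at but do not pin down is that the paths on $[0,\tau]$ for varying $\tau<T^\ast$ cohere into a single branch on $[0,T^\ast)$; local uniqueness from the inverse function theorem together with a standard open-closed connectedness argument closes this, and the analogous issue is hidden in the paper's implicit appeal to uniqueness of the ODE solution.
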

\begin{proof}
See Appendix. 
\end{proof}
\subsection{Checking the Conditions Using semidefinite programming}
Written in terms of the vector of voltage phasors, these constraints can be expressed as polynomial equalities or inequalities in $V$. These constraints are generally non-convex constraints and may not be amenable to tractable computation. Thus, we attack the problem using the moment relaxation approach \cite{lasserre2009moments}, leading to a semidefinite programming based sufficient condition for the implication \eqref{eq:MainImplication}. We do this in two steps: We first find a tightening of the operational constraints such that every point in the feasible set defined by the operational constraint also has a non-zero power flow Jacobian - this certifies \eqref{eq:MainImplication1}. Then, we show that if $s\in \Scal$  is feasible injection vector,  then it \emph{must be} strictly feasible - this certifies \eqref{eq:MainImplication2}. 
\subsubsection{Non-Singularity of the Jacobian}
\begin{figure}
\centering
\includegraphics[width=.7\columnwidth]{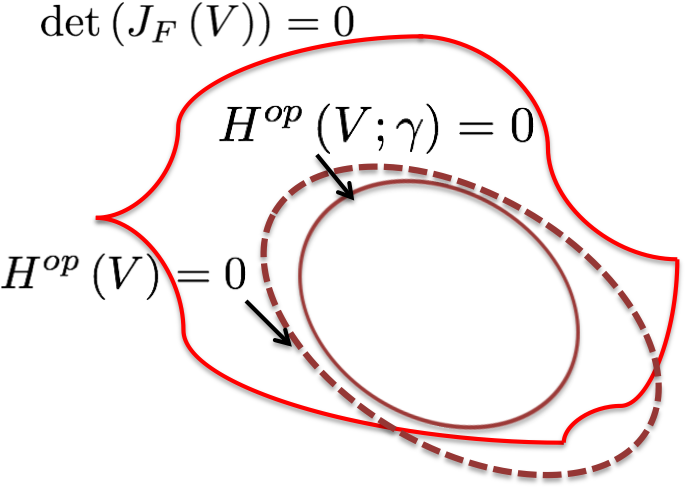}
\caption{Tightening the operational constraints to ensure that $\Fc\br{V;\gamma}\geq 0 \implies \detb{\JF\br{V}}\neq 0$.}
\label{fig:Shrinkage}
\end{figure}
Our first step is to ensure that every valid voltage phasor vector $V$ satisfying the operational constraints $\Fc\br{V}\geq 0$ (and the validity constraint $\Hc\br{V}=0$) also satisfies $\detb{\JF\br{V}}\neq 0$. However, this may not be true. Thus, we also allow for a tightening of the operational constraints such that this becomes true (see figure \ref{fig:Shrinkage}). Specifically, we consider tightening the line limits: We replace $|V_i-V_j|\leq \fb_{ij}$ by $|V_i-V_j|\leq \min\br{\gamma,\fb_{ij}}$ for some $\gamma>0$. Denote the modified operational constraints as $\Fc\br{V;\gamma}\geq 0$. We then solve the following problem:
\begin{align}
&\text{Maximize } \gamma \text{ such that } \nonumber \\
&\Fc\br{V;\gamma}\geq 0, \Hc\br{V}=0 \implies \detb{\JF\br{V}}\neq 0\label{eq:TightenCond}
\end{align}
The implication in \eqref{eq:TightenCond} is equivalent to infeasibility of the following system:
\begin{subequations}
\begin{align}
& \text{Find } \br{z,V}\text{ such that }  \\
& \JF\br{V}z = 0 \\
& \tran{z}z = 1 \\
& \Fc\br{V;\gamma}\geq 0,\Hc\br{V}=0
\end{align}\label{eq:JSing}
\end{subequations}
If this system is infeasible, we know that for each $V$ such that $\Fc\br{V;\gamma}\geq 0,\Hc\br{V}=0$, there is no $z$ such that $\tran{z}z=1$ and $\JF\br{V}z=0$, which means that $\JF\br{V}$ is non-singular. This feasibility problem involves non-convex constraints and is difficult to solve in general. However, all the constraints are defined by polynomial equations and inequalities in $\br{V,z}$. We perform a moment relaxation of this problem \cite{lasserre2009moments}, which replaces the non-convex constraints by weaker convex constraints. If there is no $\br{z,V}$ that satisfy the weaker constraints, then we know that \eqref{eq:JSing} is also infeasible. Checking the infeasibility of the moment relaxation to solving a semidefinite program, for which there are well-known efficient algorithms and software. This conversion is discussed in Appendix section \ref{sec:MomentRelax}. We note the sizes of the resulting SDPs for some IEEE benchmarks in section \ref{sec:simulations}.
\subsubsection{Strict Feasibility of Constraints}
For each constraints $i$ in $\Fc$, we solve the following feasibility problem:
\begin{subequations}
\begin{align}
&\text{ Find } V \text{ such that}\\
& \Fc_i\br{V;\gamma}=0\\
& \Fc\br{V;\gamma}\geq 0,\Hc\br{V}=0,F\br{V}=s,s \in \Scal
\end{align}\label{eq:Fineq}
\end{subequations}
If the above problem is infeasible, then we have a certificate of \eqref{eq:MainImplication2}. The constraints are quadratic functions of $V$. Again, this is a non-convex optimization problem, but it can be relaxed using the moment relaxation approach, described in section \ref{sec:MomentRelax}. The overall certification procedure is summarized in Algorithm \ref{alg:Certification}.

\subsubsection{Tightness of the Certificate}
In practical cases (section \ref{sec:simulations}), we find that the value of $\gamma$ obtained by solving \eqref{eq:TightenCond} often exceeds the practical bounds by a significant margin. Thus, for all practical purposes, we can assume that the tightening in step 1 of the certification procedure does not significantly impact the conservatism in the certificate.
Now let us consider the second part, which involves checking infeasibility of \eqref{eq:Fineq} for each $i$. Suppose that this fails for some $i$. Do we know that $\Scal$ is not a domain of strict feasibility? In other words, can one be assured that there exists $s\in\Scal$ such that no solution of $F\br{V}=s$ satisfies $\Fc\br{V;\gamma}\not> 0$? In general, the answer is no (if it were not, then would be solving an NP-hard problem exactly). However, when the infeasibility test fails, the solver does output $\br{V,s}$ which are feasible for a relaxed version of the constraints in \eqref{eq:Fineq} (see section \ref{sec:MomentRelax}). One can then check if $\br{V,s}$ satisfy the constraints in the problem \eqref{eq:Fineq}. If this is true, one is certain that the set $\Scal$ is not a domain of strict feasibility. In the numerical section \ref{sec:simulations}, we find that this is indeed the case for the test cases we experiment with.
\begin{algorithm}
\caption{Certifcation Procedure}\label{alg:Certification}.
\begin{algorithmic}
\State Find maximum $\gamma$ such that \eqref{eq:JSing} is infeasible.
\State Check that the relaxation of \eqref{eq:Fineq} is infeasible for each $i$.
\end{algorithmic}
\end{algorithm}
\section{Practical applications} \label{sec:applications}
Despite the promising opportunities for acceleration of the SDP solvers described in the previous sections, the resulting problem is still going to be too computationally intensive for real-time applications. Hence, we propose to use the algorithm in two stage procedure. On the first, offline stage one or multiple certificates are constructed for different base operating points $V^0, s^0$ and possibly different topologies of the system. Each of the certificates establishes feasibility of the whole region of injection space, and can be reused multiple times in online applications. 

During the online stage the simple algebraic form of the region definition is used for fast assessment of feasibility of possible operating points and for fast convex optimization of control actions. We envision several applications where the certificates could be used in real-time operation:\\
\underline{Security assessment}. Many of the real-life contingencies can be represented as rapid changes in power injections. These mainly result from triggering of protection systems on loads and distributed generation, although in the future events like unexpected wind gusts and clouds covering major solar plants may have similar effect on the system. Certificates established in this work can be used to screen the safe scenarios from the contingency list, and identify critical contingencies. More generally they naturally establish the levels of uncertainty in power injection vector that can be tolerated by the power system. Whenever the prior distribution of the renewable fluctuations is known, the certificates can be also used to verify the low probabilities of system failure. This can be accomplished by inscription of the injection probability level set in the certified feasibility region. \\
\underline{Establishment of do-not-exceed limits}. As proposed in \cite{Zhao:2015bl} the do-not-exceed (DNE) limit strategy is viable option for limiting the effect of uncontrollable renewables on the security of the system. The certificates can be used for regular updates of DNE limits. Although this strategy can be more conservative in comparison to optimization approaches developed in \cite{Zhang:2015di}, its advantages include reliance on AC power flow equations and effectively zero computational cost. The latter advantages opens the possibility of using the approaches in micro-grid setting with limited computational resources for dispatch decision making.

\section{Simulations}\label{sec:simulations} 

As noted previously, the first part of the certification \eqref{eq:JSing} only depends on the network structure and not on the particular operating point. Thus, as long as the network topology and parameters are fixed, this can be computed once (offline). In table \ref{tab:JacNS}, we have the maximum value of $\gamma$ (a uniform upper bound on $|V_i-V_j|$ for all transmission lines) for three different test networks.  The values obtained show that the bounds are non-conservative and looser than the actual flow constraints imposed in practice. In the subsequent sections, we apply our methodology to three networks included with the Matpower package \cite{zimmerman2011matpower}. We construct polytopic and ellipsoidal regions centered around a nominal solution, and check whether the inscribed regions are tight (in the sense that they cannot be expanded without violating some of the operational constraints). 

\subsection{3-bus system}
We start our discussion with a three-bus system model depicted in figure \ref{fig:3busnetwork}. The example is taken from \cite{MolzahnLimits} and was used to prove that the SDP relaxation of the Optimal Power Flow (OPF) \cite{lavaei2012zero} problem may not be exact . 

All the buses are \PV~buses, so the only controllable injections are the active injections at buses $2$ and $3$ (bus $1$ is the slack bus). The true feasible set (obtained via brute-force discretization) is plotted (in the $p_2$-$p_3$ space, in p.u system with a 100MVA base) in blue in figures \ref{fig:Case3Rect},\ref{fig:Case3Ellipse}. Superposed on it, we plot various subsets $\Scal$ that we certify using the procedure from algorithm \ref{alg:Certification}. In figure \ref{fig:Case3Rect}, we look at $\Scal$ defined in terms of individual bounds on $p_2,p_3$ and plot the largest rectangles with a given center and aspect ratio that can be embedded in the feasible set. The results show that the embedding is tight, ie, one of the corners of the rectangle is typically close to the boundary of the feasible set so that the rectangle cannot be expanded while still being a \DSF.  
 
 In figure \ref{fig:Case3Ellipse}, we plot an ellipse inscribed in the feasible set. The ellipse covers a much larger area relative to the rectangles, since the axes of the ellipse are aligned roughly with the directions in which the feasible set extends. Again, we see that the ellipsoid is non-conservative in the sense that on increasing its radius (while maintaining the aspect ratio), it would no longer be contained within the feasible set.
 
 The ellipsoid embedding also shows how this may be advantageous compared to traditional SDP relaxations of the OPF problem \cite{lavaei2012zero}\cite{low2014convex}. 
The standard SDP relaxation of the OPF problem would work with the convex hull of the blue feasible set, which is denoted by the black curve in figure \ref{fig:Case3Ellipse}. In the top left corner of the blue region, the convex hull contains points not in the original feasible set. If $p_2$ is fixed to $-.6$, and the OPF objective is to minimize $p_3$, it is easy to see that the SDP relaxation of the OPF problem will find a solution that is infeasible for the original problem. On the other hand, with our approach, by first inscribing an ellipse and minimizing $p_3$ subject to the constraint that $\br{p_2,p_3}$ lie in the ellipse, we will obtain a point on the boundary of the ellipse that is feasible and near optimal (since the boundary of the ellipse in that direction is close to the boundary of the feasible set).

\begin{figure}[htb]
        \centering
        \begin{subfigure}[b]{0.5\columnwidth}
                \includegraphics[width=\textwidth]{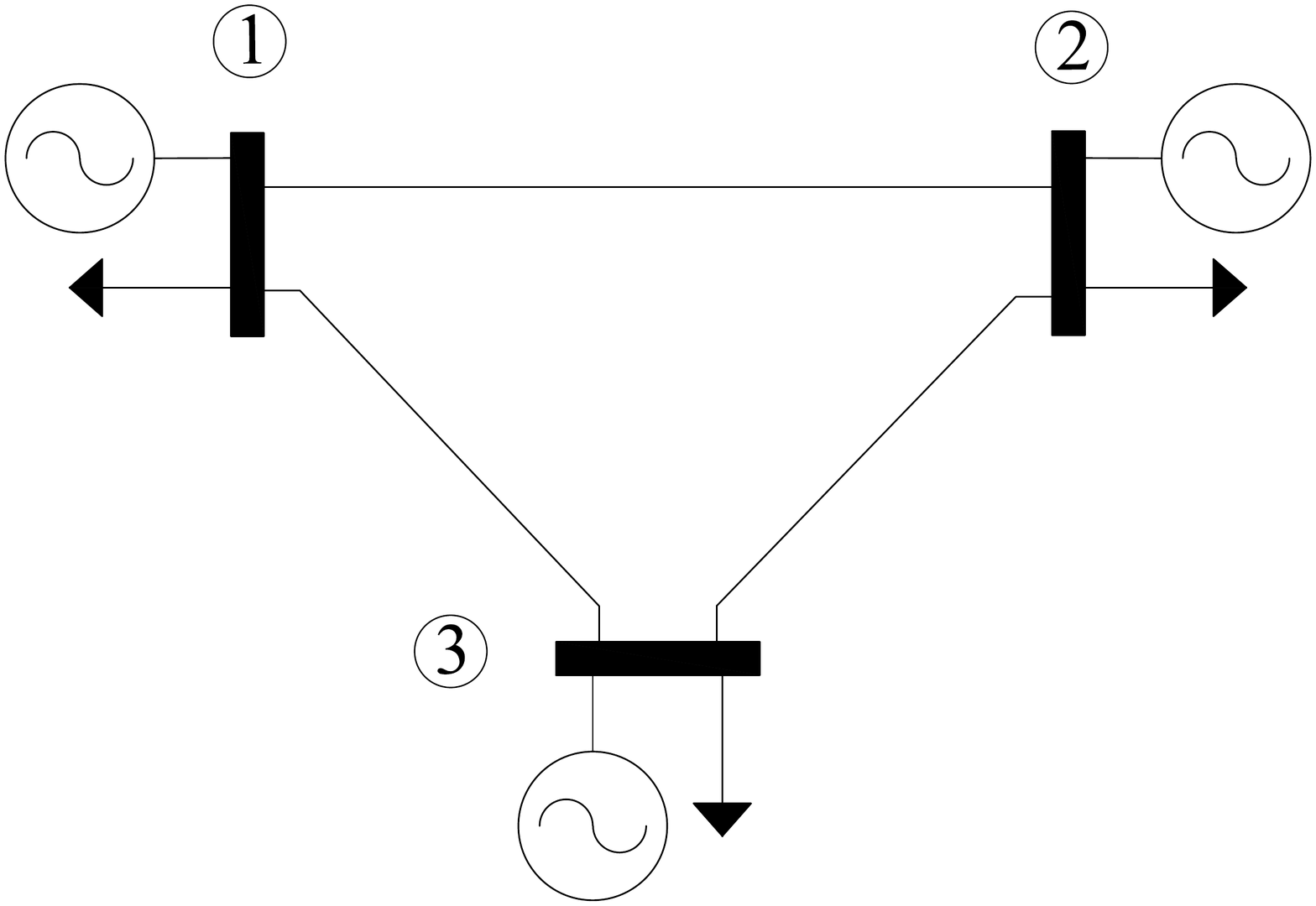}
                \caption{3-bus network}\label{fig:3busnetwork}
        \end{subfigure}
        ~ 
        \begin{subfigure}[b]{0.7\columnwidth}
                \includegraphics[width=\textwidth]{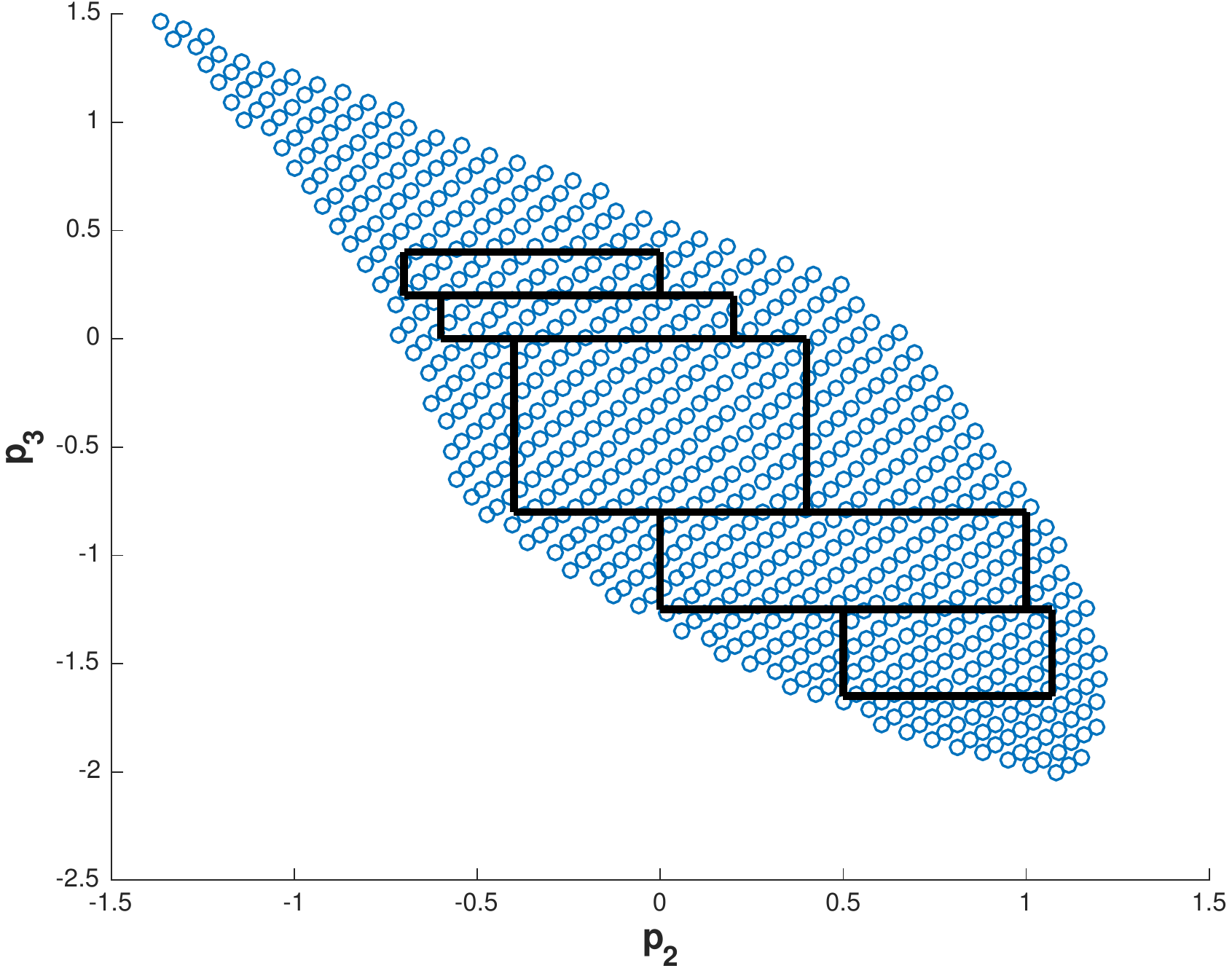}
                \caption{Embedding rectangles in feasible set: Blue region is the true feasible region and each black box is an embedded rectangle in the feasible region.}
                \label{fig:Case3Rect}
        \end{subfigure}
        ~ 
        \begin{subfigure}[b]{0.7\columnwidth}
                \includegraphics[width=\textwidth]{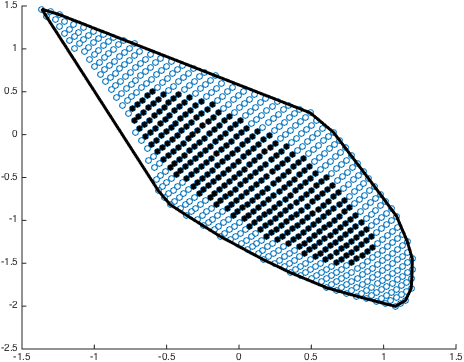}
                \caption{Embedding ellipses in feasible set: Blue region is feasible region and black ellipse is embedded in the feasible region. The black curve denotes the convex hull of the feasible set.}
                \label{fig:Case3Ellipse}
        \end{subfigure}
        \caption{3 Bus Network}\label{fig:animals}
\end{figure}

\subsection{6-bus system}
Next, we consider the 6 bus system shown in figure \ref{fig:6busnetwork}. Buses 1-3 are \PV~buses and buses 4-6 are \PQ~buses. We choose a uniform bound on the voltage differences between neighbors $|V_i-V_j|\leq .4$. \eqref{eq:JSing} is indeed infeasible with this choice of $\gamma$, so that every voltage vector satisfying these constraint has a non-singular Jacobian. The injection parameters are the active injections at buses 2 and 3, and the active and reactive injections at buses 4,5 and 6, leading to a 8-dimensional parameter space. We attempt to embed an a hypercube ($\Scal=\{s:\norm{s}_\infty\leq \delta\}$) inscribed in the feasible set. This is the analog of the rectangular region in the previous case. We are not limited to this set, but we chose this for simplicity.

We compute the maximum $\delta$ such that $\Scal$ remains contained in the feasible region- it turns out to be $.728$ p.u. This means that if all active (at the non-slack buses) and reactive injections (at the \PQ~buses) are bounded by $.729$ (in absolute value), we are guaranteed that there is a PF solution satisfying the operational constraints. On increasing $\delta$ further (to $.73$), we find that the upper bound on $|V_1-V_5|$ is violated (that is \eqref{eq:Fineq} fails to remain infeasible for this constraint). 

\begin{table}[htb]
\begin{center}
\begin{tabular}{|c|c|c|c|}
\hline
System & Case 3 & Case 6 & Case 14 \\
\hline
Maximum $\gamma$ & 1 & .7 & .58 \\
 \hline
\end{tabular}
\end{center}
\caption{Certifying Jacobian Non-Singularity. Maximum Value of $\gamma$ for different networks}
\label{tab:JacNS}
\end{table}

\begin{figure}
        \centering
                \includegraphics[width=.5\columnwidth]{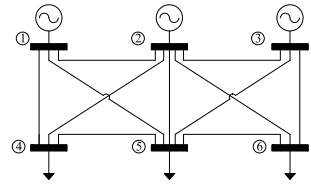}
                \caption{6-bus network}\label{fig:6busnetwork}
        \end{figure}

\subsection{14-bus system}
\begin{figure}[ht]
\centering
    \includegraphics[width=.7\columnwidth]{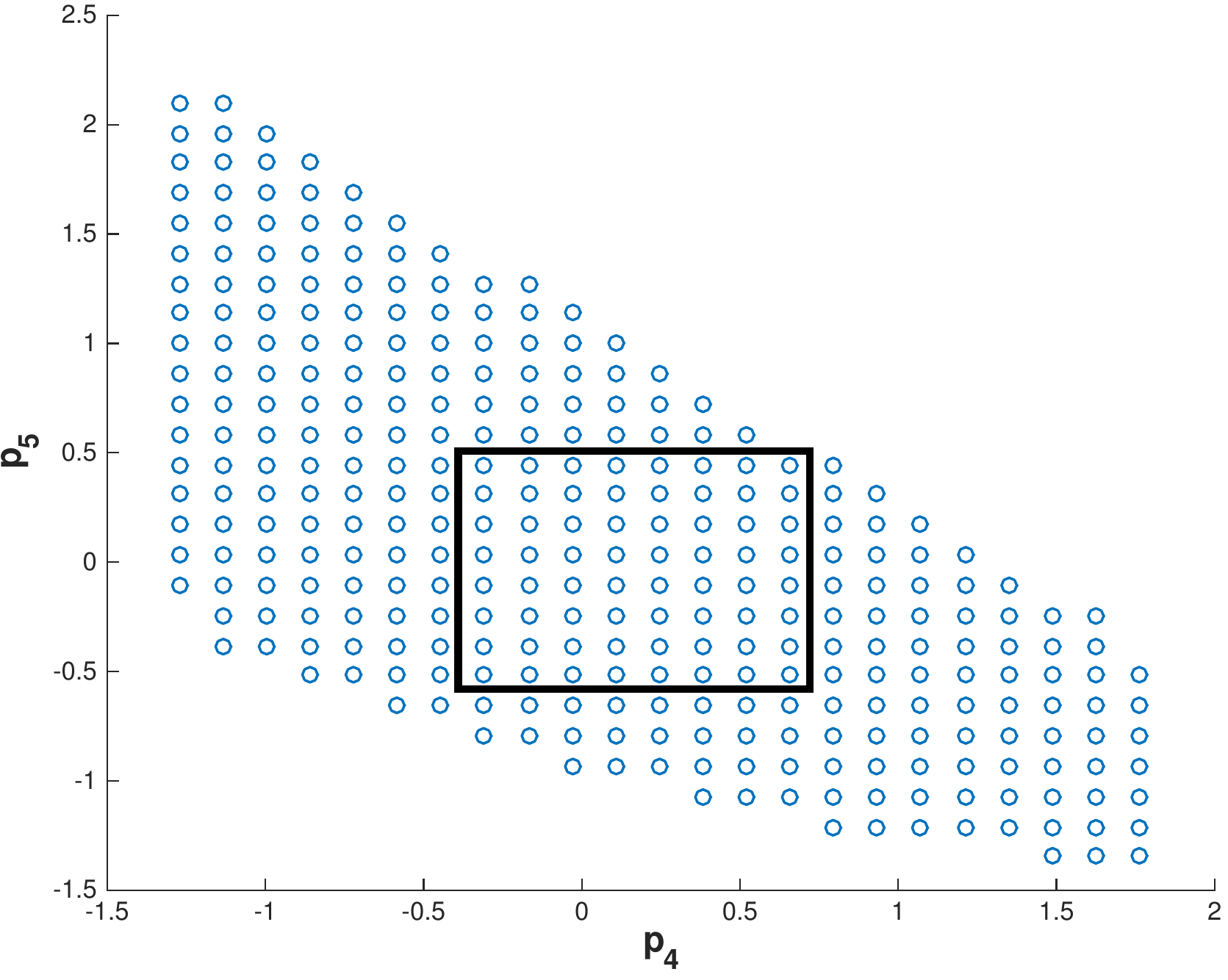}
    \caption{Cross-section of the true feasibility region and the inscribed certificate for IEEE $14$-bus system.}\label{fig:14busnetwork}
\end{figure}
We repeat the same experiment with the IEEE 14 bus network. In this network, we find that a uniform bound on all injections, is too conservative. We thus use a heuristic to pick the relative scaling of the bounds on the injections at each bus. We create random voltage profiles satisfying the operational constraints, and compute the corresponding injection vectors. We then ``fit'' a box region to the resulting samples, and attempt to certify that this region (or a scaled version) is completely contained within the feasible set. 

\begin{table*}
\centering
\begin{tabular}{|c|c|c|c|c|c|c|c|c|c|c|c|c|c|}
\hline
 Bus &  2 & 3 & 4 & 5 & 6 & 7 & 8 & 9 & 10 & 11 & 12 & 13 & 14\\
 \hline
 $p_{\min}$ & -0.2942 &  -0.1455   &-0.3904  & -0.5806  & -0.1618  & -0.2446 &  -0.1108 &  -0.3683  & -0.1274  & -0.0939 &  -0.0446 &  -0.0953 &  -0.0504  \\
 \hline
 $p_{\max}$ & 0.5280 &   0.1752 &   0.7206 &    0.5078 &    0.3206 &    0.3401 &    0.1073 &    0.3285 &    0.3497 &    0.1750 &    0.1292 &    0.2338 &    0.1198 \\
 \hline
\end{tabular}
\centering
\vskip 2em
\begin{tabular}{|c|c|c|c|c|c|c|c|c|c|c|c|c|c|}
\hline
 Bus &   4   &  5  &   7 &     9 &    10 &    11 &    12 &    13 &    14 \\
 \hline
 $q_{\min}$ & 0.2802 &  0.4454 &    0.0253 &   -0.0964 &    0.0210 &    0.0181 &   -0.0086 &    0.0020  &   0.0066\\
 \hline
 $q_{\max}$ & 0.7384  &  0.8873 &   0.1944 & 0.1750 &    0.2547 &    0.1586 &    0.1278 &    0.2213   & 0.1034 \\
 \hline
\end{tabular}
\caption{Bounds on active and reactive power injections at each bus in the IEEE 14 bus network, guaranteeing existence of a feasible power flow solution (in per unit system with a 100 MVA base)}\label{tab:Bounds}
\end{table*}

We plot a projection of the feasible injection onto the $p_{4}-p_{5}$ space (the injections at buses $4$ and $5$) in figure \ref{fig:14busnetwork}. The figure shows that $\Scal$ indeed does touch the boundary of the feasible set and cannot be expanded further. The actual bounds established for each bus in the network are quire reasonable and are shown in table \ref{tab:Bounds}.
\subsection{Computation Time}
Table \ref{lab:CompNS} shows the size of the largest positive semidefinite (PSD) constraint produced while solving the sparsity-exploiting moment relaxations of \eqref{eq:Fineq},\eqref{eq:JSing}. This is usually the dominating factor in the computational effort of solving the moment relaxation. In \cite{molzahn2014sparsity}, the authors show how a similar moment relaxation can be scaled upto a 300 bus network. In newer recent work, the authors have managed to scale this up to several thousand buses\cite{molzahnNew}. 

In terms of actual implementation, we use the convex optimization parser-solver CVX \cite{cvx}\cite{gbas08} and MOSEK \cite{mosek} as the underlying solver. For the $14$-bus system, on a 2014 Macbook Pro with a 2.6GHz Intel Core i7 processor, it takes 20 seconds to solve \eqref{eq:JSing} and each instance of \eqref{eq:Fineq}. Using the ideas from \cite{molzahn2014sparsity,molzahnNew}, we believe that the approach can be scaled to several thousand buses.

\begin{table}[htb]
\begin{center}
\begin{tabular}{|c|c|c|c|c|}
\hline
Num of buses & 3 & 6 & 14 & 30 \\
\hline
Size of PSD & 10 & 55 & 105 & 190  \\
 \hline
\end{tabular}
\end{center}
\caption{Certifying Jacobian Non-Singularity. Maximum Value of $\gamma$ for different networks}
\label{lab:CompNS}
\end{table}
\section{Discussion and Conclusions}\label{sec:conclusions}
We have developed a novel computational approach for constructing geometrically simple regions of existence of feasible solution in the power injection space. Construction of these regions is based on semidefinite programming techniques which were recently proven to be both effective and scalable \cite{lavaei2012zero,molzahn2014sparsity}.   Our numerical experiments on small test networks have indicated that the constructed regions are tight, so any rescaling of a particular shape results in violation of constraints at least for some operating points.
Regions that are certified this way have a very simple geometric shape of a polytope or an ellipse and can be naturally used in a number of security assessment or emergency control applications that require fast decision making. Two particularly suitable applications are the assessment of overload and voltage collapse risks in the presence of renewable generation uncertainty and identification of optimal load shedding actions in emergency situations. 

There are several ways how the approach can be improved and extended. First, the current version of the algorithm does not optimize with respect to the shape of the certified region. Identifying the largest ellipsoid that can be inscribed in a feasibility set may significantly improve the conservativeness of the approach. These techniques can be naturally extended to small-signal stability (analyzing Jacobians of the dynamical equations instead of the power flow equations). Integration of small-signal and transient stability for credible contingencies would complete the full characterization of the safe operation region and will become a powerful tool for system operators.
\section*{Acknowledgements}
We thank Dan Molzahn for insightful discussions and for sharing code on sparse moment relaxations. We also thank Misha Chertkov, Steven Low, Janusz Bialek, Xiaozhe Wang and Hung Nguyen for useful feedback on this manuscript. 

\bibliographystyle{IEEEtran}
\bibliography{feasibilitylibrary}

\section{Appendix}
\subsection{Proof of theorem \ref{thm:PFCond}}
Let $s\in\Scal$ be arbitrarily chosen. Using assumption \ref{assum1}, we know that $\exists s^0\in\Scal$ and a valid voltage phasor vector $V^0$ such that $s^0=F\br{V^0}$, $\Fc\br{V^0;\gamma}>0$. We study the dynamical system \[\frac{d}{dt} {\begin{pmatrix}\br{\angle V}_{\NSB} \\ \br{|V|}_{\Lo}\end{pmatrix}}=\inv{\JF\br{V}}\br{s-F\br{V}}\]
with initial condition $V\br{0}=V^0$. Then $\frac{d}{dt} \br{F\br{V}-s}=-\br{F\br{V}-s}$ so that 
\begin{align}
&F\br{V\br{t}}-s=\expb{-t}\br{F\br{V^0}-s}\implies \nonumber\\
&F\br{V\br{t}}=s(1-\expb{-t})+\expb{-t}s^0 \in \Scal\label{eq:Pf1cond}
\end{align}
By \eqref{eq:MainImplication1}, we know that $\JF\br{V}$ is non-singular for every feasible for every $V$ such that $\Fc\br{V;\gamma}\geq 0,\Hc\br{V}=0$, so the dynamical system is well defined as long as $\Fc\br{V;\gamma}\geq 0$ (the constraint $\Hc\br{V}=0$ is always maintained since the dynamics only changes $\br{\angle V}_{\NSB},\br{|V|}_{\Lo}$). We start with $\Fc\br{V^0}>0$. Suppose that $t^*<\infty$ is the first time instant the dynamical system hits the operational boundary, ie, $\Fc_i\br{V\br{t^*}}=0$ for some $i$. The dynamical system is well-defined upto this point. By implication \eqref{eq:MainImplication2}, we know that since $F\br{V\br{t^*}}\in\Scal,\Fc\br{V\br{t^*}},\Hc\br{V\br{t^*}}=0$, we must have $\Fc\br{V\br{t^*}}>0$, which is a contradiction to $\Fc_i\br{V\br{t^*}}=0$. Thus, the dynamical system always satisfies $\Fc\br{V\br{t}}>0$, so that \eqref{eq:Pf1cond} implies that it must converge to $V$ such that $F\br{V}=s,\Fc\br{V}>0$, so that $s$ is strictly feasible. Since $s\in\Scal$ was chosen arbitrarily, $\Scal$ is a domain of strict feasibility.
\subsection{Moment Relaxation}\label{sec:MomentRelax}
In this section, we describe the moment relaxation of \eqref{eq:JSing}. For any $x\in \R^n$, let $\Poly{x}{i}$ denote the vector of all monomials of degree upto $i$  in $x$ (with the first entry equal to the 0-degree monomial $1$). For example,
$\Poly{x}{2}=\tran{\begin{pmatrix} 1 & x_1 & \ldots & x_n & x_1^2 & x_1x_2 & \ldots x_n^2 \end{pmatrix}}$. Let $\Vc=\tran{\begin{pmatrix} z & \Rep{V} & \Imp{V} \end{pmatrix}}$. Let $m$ be the size of $\Poly{\Vc}{4}$. We define a moment vector $y$ of the same size as $\Poly{\Vc}{4}$ and the linear operator on the space of all degree-$4$ polynomials in $\br{\Vc}$. 
\[\mathcal{L}_y\br{\sum_{i=1}^m c_i\mathrm{Poly}^4_i\br{\Vc}}=\sum_{i=1}^m c_iy_i\]
We also define the localizing matrix for $i=1,2$: $X_i = \Poly{\Vc}{i}\tranb{\Poly{\Vc}{i}}$. The moment relaxation of \eqref{eq:JSing} is given by the following feasibility problem:
\begin{subequations}
\begin{align}
&\text{Find } y \text{ such that }  \\
& \Ly{\JF\br{V}z{\begin{pmatrix} 1 & \tran{\Vc} \end{pmatrix}}}=0 \\
& \Ly{\br{|V_i-V_j|^2-{\br{\min\br{\fb_{ij},\gamma}}}^2}X_1}\succeq 0\\
& \Ly{\br{|V_i|^2-\Vset_i^2}X_1}= 0, i \in \G \\
& \Ly{\br{|V_0-1|^2}X_1}= 0,\Ly{\br{\tran{z}z-1}X_1}=0 \\
& \Ly{X_2}\succeq 0,y_1=1
\end{align}\label{eq:MomentRelax}
\end{subequations}
Each constraint here is either a linear equality constraint or a linear matrix inequality in $y$. If \eqref{eq:JSing} is feasible, then \eqref{eq:MomentRelax} is feasible as well: Let $\br{z,V}$ be feasible for \eqref{eq:JSing} and define $\Vc=\tran{\begin{pmatrix} z & \Rep{V} & \Imp{V} \end{pmatrix}}$. Then, $y=\Poly{\Vc}{4}$ is feasible for \eqref{eq:MomentRelax} by construction. Thus, if \eqref{eq:MomentRelax} is infeasible, then so is \eqref{eq:JSing}.

The moment relaxation of problem \eqref{eq:Fineq} can be formulated similarly. Suppose that $\Scal=\{s:\norm{s-s^0}\leq \delta\}$ (ellipsoids and polytopes in $s$ can be handled similarly). We define $\Vc=\tran{\begin{pmatrix} s & \Rep{V} & \Imp{V} \end{pmatrix}}$ and $y,\Poly{\Vc}{4},X_1,X_2$ similarly as before. We can then construct the moment relaxation of \eqref{eq:Fineq}:
\begin{subequations}
\begin{align}
&\text{Find } y \text{ such that }  \\
& \Ly{\Fc_i\br{V}X_1}= 0, \Ly{\br{F_j\br{V}-s_j}X_1}= 0,j=1,\ldots,\nf \\
& \Ly{\Fc_j\br{V}X_1}= 0, j=1,\ldots,|\Fc|,j\neq i\\
& \Ly{\Hc_j\br{V}X_1}\succeq 0,j=1,\ldots,|\Hc|\\
& \Ly{X_2}\succeq 0,y_1=1,\Ly{\br{\delta^2-\norm{s-s^0}^2}X_1}\succeq 0
\end{align}\label{eq:MomentRelax2}
\end{subequations}
Infeasibility of \eqref{eq:MomentRelax2} implies infeasibility of \eqref{eq:Fineq}, using a similar argument as before. However, if \eqref{eq:MomentRelax2} is feasible, one can extract $\br{V,s}$ from $y$. If this is feasible for \eqref{eq:Fineq}, then we know that $\Scal$ is not a \DSF, since it contains a point that is not strictly feasible.
\end{document}